\begin{document}
%
\title{Multitape automata and finite state transducers with lexicographic weights}
%
%
%

\author{Aleksander Mendoza-Drosik}

\newtheorem{theorem}{Theorem}
\newtheorem{definition}{Definition}

\maketitle

\pagestyle{empty}
\thispagestyle{empty}

\begin{abstract}
Finite state transducers, multitape automata and weighted automata have a lot in common. By studying their universal foundations, it's possible to discover new insights into all of them. The main result presented here is the introduction of lexicographic finite state transducers, that could be seen as intermediate model between multitape automata and weighted transducers. Their most significant advantage is being equivalent but often exponentially smaller than even smallest nondeterministic automata without weights. Lexicographic transducers were discovered by taking inspiration from Eilenberg's algebraic approach to automata and Solomonoff's treatment of a priori probability. Therefore, a quick and concise survey of those topics is presented, prior to introducing lexicographic transducers. 

\end{abstract}

\begin{IAENGkeywords}
Mealy machines, transducers, sequential machines, computability, complexity
\end{IAENGkeywords}

%
\IAENGpeerreviewmaketitle

\section{Introduction}

\subsection{Preliminaries}

Product of sets $B$ and $C$ is the set $B\times C$ of all ordered pairs $(b,c)$ such that $b\in B$ and $c \in C$. A (partial) function $B\rightarrow C$ is a subset of $B\times C$ such that $(b,c),(b,c)'\in B\rightarrow C$ implies $c=c'$. Given some function $A \subset B\rightarrow C$, we say that $A$ is total if for every $b$ there exists some $c$, such that $(b,c)\in A$. We shall not differentiate between $(B\times C) \times D$ and $B\times (C \times D)$.  We also assume that $\rightarrow$ binds weaker than $\times$, hence $B \times C \rightarrow D$ stands for $(B \times C) \rightarrow D$. One can easily check that $B \rightarrow (C \rightarrow D)$ is the same as $B \times C \rightarrow D$, but it's different from $(B \rightarrow C) \rightarrow D$.

Suppose $\diamond$ is some total function $\diamond \subset A \times A \rightarrow A $, then set $A$ together with $\diamond$ is called a monoid if two criteria are met.
First there must exist some element $1_A \in A$ (called \textit{
	identity element}) such that $\diamond(1_A,a)=\diamond(a,1_A)=a$ for all $a\in A$. Second, it must always hold that $\diamond(\diamond(a_1,a_2),a_3)=\diamond(a_1,\diamond(a_2,a_3))$.
Instead of writing $\diamond(a_1,a_2)$, one can also use infix notation $a_1 \diamond a_2$. Thanks to the second criterion, the order of brackets doesn't matter and we can omit them, as in $a_1 \diamond a_2 \diamond a_3$.

If $A$ contains two elements $a_1$ and $a_2$ such that $a_1 \diamond a_2 = 1_A$, then we call them invertible. $a_2$ can be denoted as $a_1^{-1}$ and called the \textit{inverse} of $a_1$. Monoid, in which every element has some inverse, is called a group. 

If $B\times C$ is a monoid, then $B$ and $C$ must be monoids themselves, with $1_{B\times C} = (1_B,1_C)$. This is called \textit{direct product} of monoids.

Basic understanding of measure theory is assumed.

\subsection{Algebraic foundations}

Suppose  $A$ is some set of \textit{labels}, $Q$ is set of \textit{vertices} and   $\delta \subset Q \times A \times Q$ a set of \textit{edges}. Then $(Q,A,\delta)$ is a labelled directed graph. Define \textbf{path}  to be a finite sequence of edges $(q_{k_1},x_1,q_{k_2}), (q_{k_2},x_2,q_{k_3}),  ... (q_{k_m},x_m,q_{k_{m+1}})$ where $q_{k_i} ,q_{k_{i+1}}\in Q$, $x_i\in A$ and $(q_{k_i},x_i,q_{k_{i+1}}) \in \delta$ for every index $i$.

If $A$ together with operation $\cdot$ (which we call "multiplication") is a monoid, then define \textbf{signature} \cite{PIN} of a path as the result of multiplying consecutive labels  $x_1 \cdot x_2 \cdot ... \cdot x_m $.

\textbf{Automaton} \cite{EILENBERG} is defined as tuple $(Q,I,A,\delta,F)$ where $Q$ and $\delta$ are finite, $A$ is finitely generated and both $I$ and $F$ are subsets of $Q$. It's common to refer to elements of $Q$ as \textit{states}, instead of vertices. Similarly  $A$ is called set of \textit{strings} or \textit{words} instead of labels. All elements belonging to some (usually fixed and known from context) generator of $A$ are called \textit{symbols} or \textit{letters}. Elements of $\delta$ are called \textit{transitions} instead of edges. States that belong to $I$ are called \textit{initial} and those belonging to $F$ are \textit{final}. Sometimes  $\epsilon$ is used instead of $1_A$ to put emphasis that neutral element is an \textit{empty string}.

Path is \textbf{accepting} if it starts in some initial $q_{k_1}$ and ends in final $q_{k_{m+1}}$. An automaton accepts string $x\in A$ if it is a signature of some accepting path. 

The elements of $A$ need not be "actual strings". For instance they might be pairs or triples of elements from other sets. In cases when $A=B\times C$, the automaton is said to be \textbf{multitape}. Note that $B$ or $C$ itself might be nested product of other sets. When $A=(B_1 \times B_2) \times C$, then automaton has 3 tapes. The distinction between single-tape and multitape automata is blurry. Indeed, a pair of letters $(b,c)$ could always be encoded as a single letter $a_{bc}$ (if $B$ has $n$ letters and $C$ has $m$, then $B\times C$ has $n\cdot m$), hence one tape can be used to encode multiple other tapes within. 

There is not much distinction between $A$ and $A\times \{\epsilon\}$. Tape that can only read an empty string, isn't read at all and doesn't make any difference to overall computation. The singleton set $\{\epsilon\}$ is a \textbf{trivial} tape. Usually there is also not much difference between $A=B\times C$ and $A=C \times B$. The order of tapes can be switched and a nearly identical automaton can always be built.

Given $A=B\times C$, the automaton is said to be \textbf{sequential up to} $B$ if $b\ne 1_B,b'\ne 1_B$  implies $(q,(bb',c),q')\notin\delta$. This ensures that as consecutive symbols are read from input tape, the transition that wasn't taken before doesn't suddenly "become valid". Automaton is \textbf{sequential} if it is sequential up to entire $A$. (Note that $A$ is same as $A\times \{\epsilon \}$). For instance, automata that allow entire strings on their edges, are not sequential, white automata that only allow individual symbols, are sequential.

Automaton with $A=B \times C$ is \textbf{deterministic up to} B   if it is sequential up to $B$ and $\vert I \vert = 1$ and $\delta \subset Q \times (B\backslash \{1_B\}) \rightarrow C \times Q$ (when the function is partial, then automaton is called \textbf{partial}, otherwise it's called \textbf{complete}).  Automaton is \textbf{deterministic}  when it is deterministic up to entire A.

Automaton with $A=B \times C$ is \textbf{$\epsilon$-free up to} B if $\delta$ is a subset of $Q \times (B\backslash \{1_B\}) \times C \times Q$. Automaton is \textbf{$\epsilon$-free} if it is $\epsilon$-free up to entire A. 

\textbf{Monoidal language}\cite{mihov_schulz_2019} is any subset of $A$. If $A$ is a free monoid then its subset is called a \textbf{classical language}. Monoidal and classical languages are jointly known under the name of \textbf{formal languages} or simply \textit{languages} for short. Language $L$ is \textbf{rational} if and only if there exists some automaton accepting all strings in $L$ and rejecting all those not in $L$. If $A$ is a direct product of several monoids, then we call $L$ a \textbf{rational relation}. If rational relation is a function, then  automaton recognizing it is called \textbf{functional}.  If $M=(Q,I,A,\delta,F)$ is some automaton, then $\mathcal{L}(M)$ is used to denote language recognized by $M$. If $\mathcal{L}(M)$ is a relation $B\times C$, then  $M(b)$ is used to denote all $c$ such that $(b,c)$ is in $\mathcal{L}(M)$. If automaton has 3 tapes, say $A=B\times C \times D$, then $M(b)$ treats it like $B \times (C \times D)$. Similarly $M(b,c)$  treats $A$ as if it was $(B\times C)\times D$. Moreover, because order of tapes makes little difference, it can be implicitly switched, hence $M(b,d)$ denotes accepted subset of $(B\times D) \times C$.

Automaton with $A=B\times C$ is \textbf{k-valued up to $B$} if the number of outputs $M(b)$ for any $b$ is bounded by constant $k$ (precisely $\vert M(b) \vert \le k$). Functional automata are exactly those that are 1-valued. Automaton is \textbf{k-ambiguous up to $B$} if for every accepted $b$ there are at most $k$ distinct accepting paths with signature $b$. Ambiguous automaton may still be functional if all the accepting paths generate the same outputs. It's possible to decide functionality of automaton in polynomial time\cite{Marie-Pierre}\cite{Gurari}.

The notion of automata can be generalized to \textit{syntactic transformation semigroup}\cite{EILENBERG2}. The $\delta$ function can be seen as the (right) action $Q\times A\rightarrow Q$ of monoid $A$ on arbitrary (possibly infinite) set $Q$. Every element $a$ of $A$ determines some (partial) function $a:Q\rightarrow Q$. When $A$ is a subset $Q \rightarrow Q$, then $(Q,A)$ is called the \textbf{transformation semigroup}. This generalizes the notion of states and alphabet symbols.

Subset $L$ (language) of strings $A=B\times C$ is \textbf{prefix free up to $B$} if there is no element $b_1$ that would be a prefix of another  $b_2$ (the notion of string prefix makes the most sense when $B$ is a free monoid). More formally if $(b_1,c_1)$ and $(b_2,c_2)$ are both in $L$ and $b_1$ is a prefix of $b_2$, then $b_1=b_2$. Automaton is \textbf{subsequential up to $B$} if it is sequential up to $B$ and $\mathcal{L}(M)$ is prefix free up to $B$. This definition is very different from those found in other papers\cite{MOHRI}\cite{MOHRI2}\cite{de_la_higuera}. Usually most authors extend their automata with additional output function for accepting states. If automaton ends in that state, then some additional final output is appended before accepting. Such functionality can be emulated by adding special symbol $\#$ as end marker \cite{HANSAN}. If seen from the perspective of transformation semigroup, all strings and symbols are functions, therefore the end marker is in a sense the same as "state output function". Moreover, any language with end marker is indeed prefix free. The resemblance is analogical to that between \textit{plain kolmogorov complexity} and \textit{prefix free complexity}\cite{KOLMOGOROV}. In essence, sequential automata continue working as long as there is input to read, whereas subsequential automata can "decide on their own" when input should end and can take some additional action.
It's easy to prove that any subsequential machine on minimal number of states can have at most one accepting state. Automata with "state output function" also have such unique accepting state but it's "secretly hidden" in the definition of automaton, rather than explicitly specified in $Q$.

There is no formal distinction between input and output tapes. For instance, given $\mathcal{L}(M) \subset B \times C \times D$, the tape $B$ could be seen as input and $C\times D$ to be the output, when $M(b)$ is used. If instead  $M(b,c)$ is used, then $B\times C$ become input tapes and $D$ becomes output. In case of $M(b,c,d)$ all tapes are input. Automata having 2 tapes, with first one designated as input, are often called \textbf{transducers}.

Norm $\vert\cdot\vert$ is a function that assigns real number to every element of some set. If $A$ is a free monoid, then  define the norm $\vert a \vert$ to be length of string $a$. If $A$ is not free, then it's much less obvious what the length should be. (For instance, if $a_1a_2=a_2$, then $a_1a_2a_2a_2$ might have length 4 or it might have length $2$ because $a_1a_2a_2a_2=a_1a_2a_2=a_1a_2$). When $A$ is a direct product of several free monoids, then one can study the relationship between their lengths. The most notable property is that if $A=\Sigma^*\times\Gamma^*$ and $\delta \subset Q\times \Sigma\times\Gamma\times Q$, then for every accepted $(\sigma,\gamma)\in A$ the lengths  $\vert\sigma\vert$ and $\vert\gamma\vert$ are equal. This can be further generalized to
$A=\Sigma^*_1\times\Sigma^*_2\times...\Sigma^*_n$. If $\delta$ is of the form such that at least one $\Sigma^*_i$ is required to be of length exactly 1 on each transition (that is $\delta\subset Q\times\Sigma^*_1\times...\times\Sigma_i\times...\Sigma^*_n \times Q$) then on the accepted subset of $A$ define induced norm $\vert (\sigma_1,...,\sigma_i,...\sigma_n) \vert = \vert\sigma_i\vert$. Such norm also coincides with length of accepting path, therefore it allows to generalise and apply pumping lemma to multitape automata.

If $A$ contains some elements with inverses then the automaton cannot be sequential. In particular suppose $a a^{-1}=1_A$ and $(q,a',q')\in\delta$ then $a'=a'1_A=a'(aa^{-1})=(a'a)a^{-1}$ but $a'a\ne1_A$ and $a^{-1}\ne1_A$, hence sequentiality is violated. 

Every input tape can be seen as a read-only tape and every output tape can be thought of as write-only. Just as there is no formal distinction between input and output, there is no distinction between read-only and write-only. The difference becomes significant only when we allow read-write tapes, also known as \textit{stacks}. In particular, if $B$ is a group, then pushing $b$ onto stack is the same a reading $b$ from tape. Popping $b$ off of the stack can be seen as reading $b^{-1}$.

$A$ may or may not contain commuting elements. If $A=B\times C$ then all  the elements of the form $(1_B,c)$ and $(b,1_C)$ commute. This phenomenon characterizes nonsequential machines which are used to encode conurrent systems (see theory of traces \cite{DIEKERT}). For this reason, all multitatpe automata with $\epsilon$-transitions are in a sense "concurrent" machines.

\textbf{Configuration} is defined to be a subset of $Q$. 
Given configuration $K$ and $x\in A$, define $\hat{\delta}$ to be transitive closure of $\delta$, that is, $\hat{\delta}(K,x)$ is the set of all states $q$, for which there exists a path starting in $K$ and ending in $q$ with signature $x$. 

In cases when $A=B\times C$  the concept of configuration can be extended to include $C$, that is, define \textbf{superposition} as a subset of $Q \times C$. Given some superposition $S$ define $\hat{\delta}_C$ such that $(q',yy')\in \hat{\delta}_C(S,x)$ whenever there exists $(q,y)\in S$ and path starting in $q$ and ending in $q'$ with signature $(x,y')$. 

Configuration is a way of capturing what states are "active" at a particular moment of computation. Superposition keeps track outputs associated with a each "active" state. Every element of superposition represents one possible branch of nondeterministic computation and the output accumulated along the way.

If some automaton $M$ is sequential up to $B$ and $\epsilon$-free up to $B$, then we can define image of configuration
\[
\delta_C(K,b) = \{  q' \in Q :  \exists_{q\in K} (q,(b,c'),q') \in \delta \}
\]
and image of superposition
\[
\delta_C(S,b) = \{  (q',cc') \in Q \times C:  \exists_{(q,c)\in S} (q,(b,c'),q') \in \delta \}
\]
and then $\hat{\delta}_C$ becomes 
\begin{align*} 
\hat{\delta}_C(S,\epsilon) &=  S \\
\hat{\delta}_C(S,bx)&=\hat{\delta}_C(\delta_C(S,b),x)
\end{align*}
In all four equations above $b$ is an element of smallest generator of $B$ (which is usually fixed and known from context). This gives an effective way of computing the output of $M$, that is, for all $x$ in $B$ there is $c \in M(x)$ whenever $(q,c) \in \hat{\delta}_C(I\times\{1_C\},x)$ for some final state $q$.

\begin{theorem}[Deterministic superposition]
	\label{superposition_deterministic}
	If  automaton over $A=B\times C$ is deterministic up to $B$ then $\vert\hat{\delta}_C(S,x)\vert\le 1$ for all $x\in B$ and all initial superpositions $\vert S \vert = 1$.
\end{theorem}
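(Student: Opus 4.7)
The plan is to proceed by induction on the length $|x|$ of $x\in B$ with respect to the fixed smallest generator of $B$. I would slightly strengthen the statement under induction to the bound $|\hat{\delta}_C(S,x)|\le 1$ whenever $|S|\le 1$, not only when $|S|=1$, since the inductive step can produce an empty superposition (when the partial function underlying $\delta$ is undefined on the relevant argument), and this empty superposition must still be fed back into the hypothesis. Before starting the induction, I would remark that "deterministic up to $B$" implies both sequentiality up to $B$ and $\epsilon$-freeness up to $B$, which are precisely the preconditions under which the recursive formulas for $\hat{\delta}_C$ given just above the theorem are in force.

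The base case $x=1_B=\epsilon$ is immediate from $\hat{\delta}_C(S,\epsilon)=S$, giving $|\hat{\delta}_C(S,\epsilon)|=|S|\le 1$. For the inductive step, decompose $x=bx'$ with $b$ in the chosen generator and $x'$ strictly shorter, so that
\[
\hat{\delta}_C(S,bx') \;=\; \hat{\delta}_C\bigl(\delta_C(S,b),\,x'\bigr).
\]
If $S=\emptyset$ there is nothing to prove, so write $S=\{(q,c)\}$ and expand
\[
\delta_C(S,b) \;=\; \{(q',\,cc') \;:\; (q,(b,c'),q')\in\delta\}.
\]
Determinism up to $B$ says that $\delta$, restricted to $Q\times(B\setminus\{1_B\})$, is a partial function into $C\times Q$; since $b\ne 1_B$ by choice of generator, the pair $(c',q')$ appearing above is uniquely determined by $(q,b)$ when it exists, so $|\delta_C(S,b)|\le 1$. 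Applying the strengthened inductive hypothesis to $\delta_C(S,b)$ and the shorter word $x'$ yields $|\hat{\delta}_C(\delta_C(S,b),x')|\le 1$, completing the step.

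The only delicate point — really the only place where the hypothesis is used — is the uniqueness of $(c',q')$ in the formula for $\delta_C(S,b)$, which rests squarely on the fact that the definition of "deterministic up to $B$" removes $1_B$ from the domain of $\delta$, so that induction on generator length is well-founded and the partial-functionality of $\delta$ actually applies. Everything else is bookkeeping on the recursive definition of $\hat{\delta}_C$. I would expect the clean form of the proof to be roughly ten lines.
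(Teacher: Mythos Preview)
Your proposal is correct and follows essentially the same approach as the paper: both argue that determinism up to $B$ forces at most one transition to be available at each step, so the size of the superposition cannot increase. The paper compresses this into two sentences, whereas you spell out the induction on $|x|$ explicitly, carefully note that determinism up to $B$ entails $\epsilon$-freeness up to $B$ (so the recursive formula for $\hat{\delta}_C$ is applicable), and strengthen the hypothesis to $|S|\le 1$ to absorb the partial case---all of which are sound refinements of the same underlying idea rather than a different route.
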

\begin{proof}
	Determinism states that $\delta \subset Q \times B \rightarrow C \times Q$, so there is at most one transition that can be taken at each step.  Therefore the number of elements in superposition cannot increase.
\end{proof}
As direct consequence it can be shown that in deterministic automata initial state and signature uniquely determine path.
This leads to introduction of the following theorem.

\begin{theorem}[Preservation of prefixes]
	\label{matching_prefixes}
	Let $M$ be some automaton over $A=B\times C$ deterministic up to $B$. For all strings $x,x'\in B$ if $M(xx')=y'\ne\emptyset$ and $M(x)=y\ne\emptyset$, then $y$ is a prefix of $y'$.
\end{theorem}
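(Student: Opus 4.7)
The strategy is to run the deterministic superposition machinery of Theorem~\ref{superposition_deterministic} on $x$, observe that the result is a single pair carrying the entire output produced so far, and then argue that extending the computation by $x'$ can only append further symbols on the right of that output.

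Since $M$ is deterministic up to $B$, every transition carries a letter from $B\setminus\{1_B\}$, so $M$ is both sequential and $\epsilon$-free up to $B$ and the effective recursion defining $\hat\delta_C$ given in the excerpt applies. Let $I=\{q_0\}$ and set the initial superposition $S_0=\{(q_0,1_C)\}$. By Theorem~\ref{superposition_deterministic}, $|\hat\delta_C(S_0,x)|\le 1$. The hypothesis $M(x)=y\ne\emptyset$ provides an accepting path of signature $(x,y)$, which contributes an element $(q,y)$ with $q\in F$ to $\hat\delta_C(S_0,x)$; combined with the cardinality bound we conclude $\hat\delta_C(S_0,x)=\{(q,y)\}$.

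By an immediate induction on $|x'|$, the two-line recursion for $\hat\delta_C$ extends to arbitrary strings, giving $\hat\delta_C(S,xx')=\hat\delta_C(\hat\delta_C(S,x),x')$, so $\hat\delta_C(S_0,xx')=\hat\delta_C(\{(q,y)\},x')$. Inspecting the definition of $\delta_C$, every element $(q',cc')$ in its image arises from a predecessor $(q,c)$ in its argument by appending some $c'$ on the right of $c$. A second induction on $|x'|$ therefore shows that every element of $\hat\delta_C(\{(q,y)\},x')$ has second coordinate of the form $y\cdot y''$ for some $y''\in C$. Because $M(xx')=y'$ is witnessed by an accepting path, some such element has first coordinate in $F$ and second coordinate equal to $y'$, so $y'=y\cdot y''$ and $y$ is a prefix of $y'$.

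The only real obstacle is the bookkeeping in the final step: one must verify that $\delta_C$ preserves the already-accumulated prefix $y$ literally rather than mutating it. This is direct from the clause $(q',cc')$ in the definition of $\delta_C$ and requires nothing more than unfolding notation, so I do not expect any genuine difficulty beyond the two short inductions.
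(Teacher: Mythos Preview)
Your proposal is correct and follows essentially the same approach as the paper: the paper's one-line proof simply invokes the uniqueness of the path with $B$-signature $xx'$ (the consequence of Theorem~\ref{superposition_deterministic} stated just before the theorem), from which the prefix relation on outputs is immediate. You have unfolded that same idea through the superposition recursion $\hat\delta_C$ rather than speaking directly of paths, but the content is identical.
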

\begin{proof}
	It follows directly from uniqueness of path that corresponds to signature $xx'$.  
\end{proof}

Theorem \ref{superposition_deterministic} also applies to single-tape automata, because $A$ can be treated like $A\times \{\epsilon\}$. Superposition belonging to $Q\times\{\epsilon \}$ is the same as configuration.

\begin{theorem}[Infinite superposition]
	\label{superposition_infinite}
	Let $M$ be an automaton over $A=B\times C$ sequential up to $B$.  $\vert M(x)\vert=\infty$ for some $x\in B$ only if $M$ contains $\epsilon$-cycle $(q_{k_1},(1_B,y_1),q_{k_2})$,...,$(q_{k_m},(1_B,y_m),q_{k_1})$ where $y_i \in C$ and $(1_B,y_1...y_m)\ne 1_{A}$.
\end{theorem}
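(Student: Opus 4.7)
The plan is to prove the contrapositive: assuming every $\epsilon$-cycle in $M$ satisfies $(1_B, y_1 \cdots y_m) = 1_A$, show that $|M(x)|$ is finite for every $x \in B$. The strategy is a standard pumping/pigeonhole argument adapted to the multitape setting, taking advantage of sequentiality up to $B$ to isolate the only source of path-length blowup, namely $\epsilon$-transitions on the $B$-tape.

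First I would fix $x \in B$ and look at any accepting path $\pi$ whose signature on $B$ equals $x$. Since $M$ is sequential up to $B$, each transition in $\pi$ has its $B$-component belonging to the smallest generator of $B$ or equals $1_B$; in particular only finitely many transitions of $\pi$ — at most $|x|$ of them — can carry a non-$1_B$ label. So if the number of distinct output strings $M(x)$ were infinite, then the set of accepting paths for $x$ would be infinite, and because the number of non-$\epsilon$ transitions per such path is uniformly bounded by $|x|$, the paths themselves would have unbounded length, forced to grow by inserting arbitrarily many $\epsilon$-transitions (those with $B$-component equal to $1_B$).

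Next I would apply the pigeonhole principle. Any accepting path whose number of $\epsilon$-transitions exceeds $|Q|$ must revisit some state $q$ during a maximal run of consecutive $\epsilon$-transitions; the portion between the two visits is an $\epsilon$-cycle in precisely the sense of the theorem, with $B$-signature $1_B$ and $C$-signature $y_1 \cdots y_m$. By our contrapositive assumption, $(1_B, y_1\cdots y_m) = 1_A$, hence $y_1 \cdots y_m = 1_C$, so excising that cycle from $\pi$ yields a strictly shorter accepting path with exactly the same overall signature $(x, c)$. Iterating the excision, every output $c \in M(x)$ is realized by some accepting path containing no repeated state during any $\epsilon$-segment; such reduced paths are bounded in length by $|Q| + |x|\cdot(|Q|+1)$, so there are only finitely many of them, and therefore $M(x)$ is finite, contradicting the hypothesis.

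The main obstacle is making the cycle-excision argument rigorous in a monoid that need not be free: I would not be able to talk about deleting symbols from $c$, but I avoid this by working entirely with path signatures and using the hypothesis $(1_B, y_1\cdots y_m) = 1_A$ directly, so concatenation of edge labels around the excised path collapses via the monoid identity and the resulting path has signature literally equal to the original one. A minor subtlety is that an $\epsilon$-segment of $\pi$ can be interrupted by non-$\epsilon$ transitions, but since there are at most $|x|$ such interruptions, at least one $\epsilon$-segment must itself exceed $|Q|$ in length once the total number of $\epsilon$-transitions exceeds $(|x|+1)|Q|$, which is exactly where the pigeonhole bite is taken.
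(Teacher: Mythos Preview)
Your proof is correct and follows essentially the same line as the paper's: sequentiality up to $B$ bounds the number of non-$\epsilon$ transitions in any accepting path for $x$, so an infinite $M(x)$ forces arbitrarily long $\epsilon$-runs and hence, by pigeonhole, an $\epsilon$-cycle. The paper argues this directly (infinite output $\Rightarrow$ unbounded output length $\Rightarrow$ unbounded $\epsilon$-transitions $\Rightarrow$ cycle) and leaves the nontriviality of the cycle's $A$-signature implicit; your contrapositive with explicit cycle-excision handles that point cleanly and, as you observe, sidesteps any appeal to a length function on $C$---but this is a refinement of presentation rather than a different method.
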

\begin{proof}
	Every time a non-$\epsilon$-transition from $\delta\subset Q \times (B \backslash \{1_B\}) \times C \times Q$ is taken, it increases the length of $x$ in the corresponding signature $(x,y)\in A$. Only $\epsilon$-transitions of the form $\delta\subset Q \times \{1_B\} \times C \times Q$ do not increase length of $x$. There are only finitely many elements $y$ of specific finite length $\vert y \vert$. Therefore in order to obtain infinite subset of $C$ it must contain strings of unbounded length. The only way to have unbounded $y$, while keeping $x$ bounded is by taking infinitely many $Q \times \{1_B\} \times C \times Q$ transitions. If there is no $\epsilon$-cycle then only finite number of $\epsilon$ transitions can be taken, before having to take some non-$\epsilon$-transition. Therefore there must be an $\epsilon$-cycle.
\end{proof}

\begin{theorem}[Functional superposition]
	\label{superposition_functional}
	Let $M$ be a functional automaton over $A=B\times C$, sequential up to $B$ and whose recognized language is of the form $L \subset B\rightarrow C$. Then there exists an equivalent automaton such that $\hat{\delta}_C(S,x) \subset Q\rightarrow C$ for all $S\subset Q\rightarrow C$ and $x\in B$ .
\end{theorem}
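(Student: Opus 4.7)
The plan is to take $M'$ to be the trim of $M$: retain only those states that are both accessible from $I$ and co-accessible to $F$, and restrict $\delta$, $I$, $F$ to the survivors. A routine argument shows $\mathcal{L}(M')=\mathcal{L}(M)$, so $M'$ inherits functionality and sequentiality up to $B$ from $M$, and has the same initial superposition $I\times\{1_C\}$.

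The key step is to show that in $M'$ every reachable superposition is already a partial function from $Q$ to $C$. Suppose $(q,c_1)$ and $(q,c_2)$ both lie in $\hat{\delta}_C(I\times\{1_C\},x)$ for some $x\in B$. By definition there exist paths from initial states to $q$ with signatures $(x,c_1)$ and $(x,c_2)$. Because $q$ survives the trim, I can fix some path from $q$ to a final state with signature $(y,d)$. Concatenating, $M'$ accepts both $(xy,c_1d)$ and $(xy,c_2d)$; since $M'$ is functional, $c_1d=c_2d$, and right-cancellation in the free monoid $C$ forces $c_1=c_2$. To pass from initial superpositions to arbitrary reachable $S$, I would invoke the composition identity $\hat{\delta}_C(\hat{\delta}_C(S,x_1),x_2)=\hat{\delta}_C(S,x_1x_2)$, which follows immediately from splitting a path at the boundary between $x_1$ and $x_2$. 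Any $S$ arising from computation has the form $\hat{\delta}_C(I\times\{1_C\},x_0)$, so $\hat{\delta}_C(S,x)=\hat{\delta}_C(I\times\{1_C\},x_0x)$ is again a function by the same cancellation argument.

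The main obstacles I anticipate are twofold. First, the cancellation step genuinely requires $C$ to be right-cancellative; the free-monoid case covers the usual transducer setting, but a more general monoid would demand extra hypotheses or a quotient construction. Second, the quantifier ``for all $S\subset Q\rightarrow C$'' in the statement is most naturally read as ranging over superpositions reachable from $I\times\{1_C\}$, since demanding the property for \emph{every} abstract function superposition would force a reverse-determinism condition (at most one source-output pair into each target for a given input) that trimming alone cannot guarantee and would instead require a subset-construction-style blowup on the state set.
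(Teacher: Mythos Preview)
Your approach is essentially the same as the paper's: remove states that cannot reach $F$, then for any surviving $q$ use a continuation to an accepting state together with functionality to rule out two distinct $C$-values at $q$. Your version is more careful---the paper's three-line argument silently relies on the cancellation step and does not address the scope of the quantifier over $S$---so the caveats you raise are well-placed refinements rather than a different route.
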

\begin{proof}
	Suppose to the contrary that there is $x$ and $q$ for which $\hat{\delta}_C(S,x)$ returns relation $Q\times C$ that is not a function $Q\rightarrow C$. Then there are two possibilities: either there is a path that starts in $q$ and ends in $F$ or there is not. If the first case is true, then $M$ is not functional, because we might follow that path and accept with multiple $C$ outputs. If the second case applies, then the state $q$ is redundant and we are free to delete it. 
\end{proof}

\subsection{Stochastic languages and weighted automata}

Suppose that $B$ is a tape of some automaton. If $B$ is a complete semiring then it is called the tape of \textit{weights} and the automaton itself is \textit{weighted}. Completeness is required because infinite sum may arise, although, this requirement can be relaxed for $\epsilon$-free automata (theorem \ref{superposition_infinite}).

\textbf{Probabilistic automaton} is any automaton, whose $A$ is a measure space with total measure $\mu(A)$ equal $1$. Every measurable subset of $A$ is called a \textbf{stochastic language} and can be treated like a random event.


Now a way of constructing automata with probabilistic weights can be presented. Let $N\ge 1$ be some natural number. Take the segment $(0,1)$ of real number line and split it into $N$ equally sized intervals. Let $\Omega=\{\omega_0,\omega_1,\omega_2,...\omega_N\}$ be the set of all those intervals $(\frac{i-1}{N},\frac{i}{N})$, including $\omega_0$ representing $(0,1)$.  Set $\Omega$ generates a monoid with multiplication $\omega_x\cdot \omega_y$ defined as 
\[
\omega_x \cdot \omega_y = (x_0,x_1)\cdot \omega_y=x_0+(x_1-x_0)\cdot\omega_y
\]  
In other words, $\omega_x$ determines linear transformation that treats $\omega_x$ as the new unit  interval $(0,1)$ and $\omega_y$ is made relative to it (for instance, if $\omega_1=(0,0.5)$ and $\omega_2=(0.5,1)$ then $\omega_1\omega_2=(0.25,0.5)$).
Norm $\vert\omega_i\vert$ is equal to the length of interval. It holds that $\vert\omega_x\cdot \omega_y\vert=\vert\omega_x\vert\cdot \vert\omega_y\vert$. Define complete semiring $B$ generated by $\Omega$ with union of intervals as additive operation (hence $\omega_1 + \omega_1\omega_2 = \omega_1$). Norm of $b$ is equal to summing and multiplying norms of individual elements of $\Omega$ (note $\vert b_1 + b_2 \vert \ne \vert b_1 \vert + \vert b_2 \vert$). As $N$ approaches $\infty$, the accuracy of $\Omega$ increases and their sums can approximate any real number. Consider $\Omega_b$ to be the set of all infinite strings starting with $b$ and $\Omega_{\epsilon}$ is the set of all possible infinite strings. The set $B$ can be turned into a measure space by mapping every $b$ into the corresponding measurable set $\Omega_b$. Such definition of measure space corresponds to Solomonoff's \textit{a priori prefix complexity}\cite{KOLMOGOROV}. Norm $\vert b \vert $ coincides with measure  $\mu(\Omega_b)$. For any subset $B'$ of $B$ the measure of $B'$ is equal to the sum $\mu(\Omega_{B'})= \vert \sum\limits_{b\in B'} b \vert$. Note that the subset $\{\omega_1,...,\omega_N\}$ itself has uniform distribution but if $(0,1)$ was partitioned in some irregular way, different distributions could be obtained. Moreover this subset can be seen as a random variable and every sequence of random variables "falls into" some $b$ in $B$ with probability $\vert b \vert$. 

Consider automaton $M$ with single initial state and transitions of the form $Q\times \Omega\backslash\{\omega_0\} \rightarrow C \times Q$. For any input $c$ take the set $M(c) \subset B$ and turn it into prefix-free set $B'$ (that is, if $b_1,b_2\in M(c)$ and $b_1$ is a prefix of $b_2$, then don't include $b_2$ in $B'$). Such set is a random event with probability $P(c)=\mu(\Omega_{B'})$. To prove that $P(c)$ never exceeds $1$, notice that in every prefix-free subset of $B$, no segments of $(0,1)$ overlap, so they can be summed without double-counting. This also implies that $\vert \sum\limits_{b\in B'} b \vert =  \sum\limits_{b\in B'} \vert b \vert $. Every string $b$ uniquely determines some path, so if both $(c,b_1)$ and $(c,b_2)$ belong to $\mathcal{L}(M)$ but $b_1$ is prefix of $b_2$, that means there is $\epsilon$-cycle starting and ending in some final state (so it's only natural and intuitive to discard $b_2$ when counting $P(c)$). 

If automaton has transitions of the form $Q\times  (\Omega\backslash\{\omega_0\}) \rightarrow C \times D \times Q$, then probability of any output $D$ can be calculated for a given input $C$. Probability $P(c,d)$ is the same as $\mu(\Omega_{M(c,d)})$ and $P(c)$ equals sum $\sum_{d\in D}P(c,d)$ of all possible outputs $d$. Then the conditional probability $P(d|c)$ is obtained from $\frac{P(c,d)}{P(c)}$.

The construction described above is called the \textbf{probabilistic semiring}.
Those familiar with the theory of weighted automata\cite{DROSTE}\cite{DROSTE2} might notice that this definition is completely different from the "standard" one. No \textit{formal power series}\cite{SALOMAA} or weight function for transitions\cite{DROSTE2} were used. Apart from assuming that $B$ is a measure space, the definition of automata wasn't extended in any way. Perhaps, the most significant difference is that everything was defined in terms of formal languages and strings, instead of resorting to summation over all possible paths. This presents an alternative approach to weighted automata, that lies much closer to theory formal languages. Tropical semiring (and all others) can be introduced in a similar approach.

Suppose that $A=B\times C$ (the order doesn't matter much) and $C$ is a complete semiring. Given some language $L$  introduce \textbf{quotient} of $L$ denoted with $L\backslash B$ and defined as
\[
(c,b) \in L\backslash B \iff b = \sum_{(b',c)\in L} b'
\]
$L$ can be any subset of $A$ but $L\backslash B$ is specifically a function $C\rightarrow B$. In case of probabilistic semiring, the probability $P(c)=\mu(\Omega_{M(c)})$ is the same as $P(c)=\vert (\mathcal{L}(M)\backslash B )(c) \vert $. This will be the starting point for defining tropical semiring in terms of strings and languages.

Consider automaton $M$ over $A=B \times C$, where $\le_B$ is some relation of total order on $B$. Then $B$ can be turned into semiring with $max$ (or $min$) as additive operation. Hence $B$ can be treated as tape of weights.  This should be called \textbf{max semiring} (or \textbf{min semiring}). If additionally $B$ commutes under multiplication, then it can be called \textbf{arctic semiring} (or \textbf{tropical semiring}). In other papers\cite{MOHRI2}\cite{MOHRI}\cite{DROSTE}, $B$ is required to represent real numbers, but such assumption is very restricting and would require infinitary alphabets\cite{MEER}. If $B$ is a free monoid with $\le_B$ representing lexicographic order, then $B$ is a special case of max semiring (min semiring), called \textbf{lexicographic arctic semiring} (or \textbf{lexicographic tropical semiring}). The lexicographic order itself might be defined by comparing strings from left to right or right to left. Because each time the automaton takes the transition, the weight is appended, rather than prepended, it makes more sense to consider right-to-left order (otherwise only the first transition would matter and the remaining steps of computation would be of little relevance). Therefore this paper considers definition
\[
b_1w_1 > b_2w_2 \iff w_1 > w_2 \mbox{ or }( w_1=w_2\mbox{ and } b_1 > b_2)
\]
where $w_1,w_2$ belong to generator of $B$.
This semiring is a new discovery, which will be investigated in depth in the next part of this paper.

Let $M=(Q,I,C\times D,\delta,F)$ be some automaton that may or may not be deterministic. Define  $\delta' \subset Q\times B\times C\rightarrow D \times Q$ to be a \textbf{disambiguation up to}  $C$ for $M$ if $(Q,I,B\times C \times D,\delta',F)$ is deterministic and $\delta$ coincides with $\delta'$ in the following sense: 
\[
(q,(c,d),q')\in\delta \iff \exists_{b\in B} (q,(b,c,d),q')\in\delta'
\] 
Note that weighted automata can often be seen as disambiguations of some otherwise nondeterministic automata.

Given any $L\subset C \times D \rightarrow B$ maximization of $B$ with respect to $D$ written as $\max\limits_{D\rightarrow B} L$ is defined as
\[
(c,d) \in \max_{D\rightarrow B} L \wedge (c,d',b) \in L \implies b \le L(c,d)
\]
Analogically also define minimization $\min\limits_{D\rightarrow B} L$.
Once a quotient of some weighted automaton is obtained, the weights can be completely erased by either minimising or maximizing them. 

Consider automaton over $A=B\times C$ with $C=C_1\times C_2 \times ... C_n$ where every $C_i$ is a max (min) semiring. Then $C$ can be turned into max (min) semiring by treating $C_i$ to the left as "more important" than those to the right. More formally $(c_1,(c_2,...c_n))>(c_1',(c_2',...c_n'))$ if and only if either $c_1>c_1'$ or $c_1=c_1'$ and recursively $(c_2,...c_n)>(c_2',...c_n')$. Such construction of $C$ is known as \textbf{lexicographic semiring} \cite{roark-etal-2011-lexicographic}.


\section{Lexicographic tropical semiring}

Consider automaton $M$ over $A=W^*\times \Sigma^* \times D$ with transitions $\delta \subset Q \times W \times \Sigma \times D \times Q$ and total order $\le_W$, which induces lexicographic order on $W^*$, making it a lexicographic tropical semiring. This guarantees that for any $b\in W^*$, $c \in \Sigma^*$ and $d \in D$ if $(b,c,d)$ is in $\mathcal{L}(M)$, then lengths $\vert b \vert$ and $\vert c \vert$ are equal. For any input $c$ the output $M(c)$ can be computed and after dividing it by $W^*$, the quotient $M(c) \backslash W^*$ is a function $D \rightarrow W^*$ assigning (lexicographically) lowest possible path to every obtainable output $D$. Because $min$ is used as semiring addition, the quotient $M(c) \backslash W^*$ becomes \[
(d,b) \in M(c)\backslash W^* \iff b = \min_{(b',d)\in M(c)} b'
\] and all $b'$ are of equal lengths (same as $\vert c \vert$). Because there are only finitely many strings of any fixed length, there is no need to require $W^*$ to be a complete semiring. Moreover, the order lexicographic $\le_{W^*}$  need not be total because comparison will never occur for strings of different lengths. Such $M$ will be referred to as \textbf{lexicographic transducers}.

An interesting property emerges, when studying superpositions $Q \times W^* \times D$.  Suppose that $S$ is some superposition obtained on lexicographic transducer by reading string $\sigma_1\sigma_2...\sigma_k \in \Sigma^*$. Let $(q,b,d)\in S$ and imagine that the automaton reads next symbol $\sigma_{k+1}$ and enters new superposition $S'$. As it takes some transition $(q,\sigma_{k+1},w,d',q')$, it causes the element $(q',bw,dd')$ to be included in $S'$.  If there were two elements $(q,b_1,d_1)$ and $(q,b_2,d_2)$ in $S$ and $b_1 < b_2$, then the inequality would still be preserved for $b_1w < b_2w$ in $S'$. In that sense, the superpositions are monotonous and  $(q,b_1,d_1)$ can be safely removed from $S$ without making any difference to $M(c)\backslash W^*$.

On the other hand, suppose that $(q_1,b_1,d_1)$ and $(q_2,b_2,d_2)$ are in $S$ and then automaton takes transitions $(q_1,\sigma_{k+1},w_1,d_1',q')$ and $(q_2,\sigma_{k+1},w_1,d_2',q')$ both leading to the same $q'$ over the same $\sigma_{k+1}$. Such states are said to be \textbf{conflicting}. In order to determine whether  $b_1w_1 > b_2w_2$, all that's needed to know is $b_1>b_2$ and $w_1>w_2$ but it's not necessary to know the actual strings, because by definition 
\[
b_1w_1 > b_2w_2 \iff w_1 > w_2 \mbox{ or }( w_1=w_2\mbox{ and } b_1 > b_2)
\]
This introduces everything that's necessary for the following theorem.
\begin{theorem}[Weights can be erased]
	Let $M$ be some lexicographic  transducer over $A=W^*\times \Sigma^*\times D$ then there exists automaton $N$ over $\Sigma^* \times D$ equivalent to  $ \min\limits_{D\rightarrow W^*}\mathcal{L}(M)\backslash W^*$.
	\label{weights_erased}
\end{theorem}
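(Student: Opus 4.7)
The plan is to construct $N$ as a subset-style product of $M$ with a nondeterministic ``commitment'' to which surviving branch will ultimately produce the minimum output. I would take the states of $N$ to be pairs $(K, q)$, where $K$ is a linearly ordered subset of $Q$ representing the reduced superposition after reading the input so far (ordered by accumulated lexicographic weight) and $q \in K$ is the branch that $N$ has guessed to be the eventual minimizer. The two reductions observed just before the theorem --- collapsing conflicting entries at a shared target into the smaller-weight predecessor, and discarding the larger of any two same-state survivors --- keep $|K| \leq |Q|$, so there are only finitely many such pairs.

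The crucial algebraic observation is that the right-to-left lex rule $b_1 w_1 < b_2 w_2 \Longleftrightarrow w_1 < w_2 \lor (w_1 = w_2 \land b_1 < b_2)$ lets the update of $K$ under a symbol $\sigma$ depend only on $K$ and on the new weight letters $w_i$ from the transitions $(q_i, \sigma, w_i, d_i, q'_i)$, never on the underlying weight strings themselves. Concretely, I sort the targets by $w_i$ with ties broken by the old position of $q_i$ in $K$, then collapse coincident targets to their smaller survivor to obtain $K'$. A transition $((K, q), \sigma, d', (K', q'))$ is included in $\delta_N$ whenever $(q, \sigma, w, d', q') \in \delta$ and $q'$ wins its conflict; if the committed branch loses, that run of $N$ dies. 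A state $(K, q)$ is final iff $q \in F$ and no element sitting strictly before $q$ in $K$ lies in $F$, so that the committed branch really is the minimum-weight survivor that reaches an accept state.

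Correctness then splits into two directions. Forward: an accepting run of $N$ on $(\sigma, d)$ projects to a specific accepting path of $M$ with output $d$ and weight $w$; by conflict-resolution any competing accepting path on $\sigma$ either already merged into an even-smaller-weight survivor at an earlier step or remains in the final $K$ strictly behind $q$, so its accepting weight cannot undercut $w$ and hence $d$ lies in $\min_{D\rightarrow W^*} \mathcal{L}(M)\backslash W^*$. Reverse: any $(\sigma, d)$ in the minimization is witnessed by a concrete accepting path of $M$, which $N$ can trace by guessing $q$ appropriately at each step. The main obstacle will be verifying that the update rule really does preserve the invariant that the recorded order on $K$ coincides with the weight order on the underlying branches, and that conflict resolution commutes with taking $\min_{D\rightarrow W^*}$ at the end; both reduce to the monotonicity $b_1 < b_2 \Rightarrow b_1 w < b_2 w$ already established by the author.
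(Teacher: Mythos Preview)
Your construction is essentially the paper's: both replace the unbounded weight-string component of the superposition by the finite datum of the \emph{relative order} among surviving states, and both pair this order information with a nondeterministically committed branch $q$ that carries the $D$-output. The paper encodes the order as a formula $\phi$ of the shape $\epsilon < S(q_{i_1}) < S(q_{i_2}) = S(q_{i_3}) < \dots$, you encode it as an ordered subset $K$; the paper handles acceptance via an auxiliary sink state $f$ subject to the same conflict filtering, you use a predicate on $(K,q)$. These are cosmetic differences.

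There is, however, a real gap. You take $K$ to be \emph{strictly} linearly ordered and resolve ties ``by the old position of $q_i$ in $K$''. But nothing in the definition of a lexicographic transducer prevents two branches from carrying genuinely equal weight strings---two transitions out of one state over the same $\sigma$ may share the same $w\in W$, and already the initial states all start at weight $\epsilon$. If two branches with equal accumulated weight and different $D$-histories collide at a common target, your collapse rule kills one of them based on an artificial tie-break, yet $\min_{D\to W^*}\mathcal{L}(M)\backslash W^*$ keeps \emph{every} $d$ attaining the minimum. Concretely, take $q_0$ with transitions $(q_0,w,\sigma,d_1,q_1)$ and $(q_0,w,\sigma,d_2,q_2)$, then $(q_1,w',\sigma',d,q_3)$ and $(q_2,w',\sigma',d,q_3)$ with $q_3$ final: both $(\sigma\sigma',d_1d)$ and $(\sigma\sigma',d_2d)$ belong to the target relation, but your $N$ accepts only one. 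The monotonicity you cite, $b_1<b_2\Rightarrow b_1w<b_2w$, says nothing about $b_1=b_2$, and that is exactly where your invariant breaks. The paper avoids this by letting $\phi$ record equalities alongside strict inequalities, so that in the conflict-resolution step neither transition is erased when the comparison is a genuine tie (the parenthetical ``if any''). The fix on your side is the same: carry a total \emph{preorder} on $K$ rather than a strict linear order, and let the committed branch survive a collapse whenever it is among the minimal predecessors rather than the unique minimum.
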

\begin{proof}
	Suppose that $M=(Q,I,A,\delta,F)$ and $N=(Q',I',\Sigma^* \times D,\delta',F')$. Conversion can be carried out using "extended" powerset construction. Instead of using $Q'=2^Q$, which can keep track of current configuration in $Q$, we need to keep track of superposition $Q \times W^*$. However, because there are infinitely many strings $W^*$, such powerset would result in infinite $Q'$. To make $Q'$ bounded, we abstract the exact strings $W^*$  away and only focus on the order relationship between them.  More precisely, let $S\subset Q \rightarrow W^*$ be some superposition and let $\phi_S$ be a formula of the following form:
	\[
	\epsilon < S(q_1) < S(q_2) = S(q_3) < ... < ... = ... < ... = S(q_n)
	\]
	where $q_1...q_n$ are all the states included in given $S$.
	Let $\Phi$ be the set of formulas for all possible superpositions. We can treat $\Phi$ as equivalence classes for $Q \rightarrow W^*$. Note that it's enough to only consider $Q \rightarrow W^*$ instead of $Q \times W^*$, because (as shown a few paragraphs before) the weights are monotonous and we can  remove all but the smallest one. 
	
	Having said all this, put $Q'= Q \times \Phi$. The extra $Q$ is needed, because we want to pick one representative state $q$ from every formula $\phi$. We can immediately remove all those elements $(q,\phi)$ of $Q'$ for which $q$ cannot be found in $\phi$.  The state $q$ we be used to keep track of $D$. Hence superposition $S'$ in $N$ that corresponds to $Q' \times D$ translates to $Q \times \{\phi\} \times D$, which can be seen as entire class of  superpositions $Q \times W^* \times D$. 
	
	The set of states used in any given $\phi$ determines some configuration $K_\phi$. For every two states $(q_1,\phi_1),(q_2,\phi_2)\in Q'$ we put transition from $(q_1,\phi_1)$ to $(q_2,\phi_2)$ over symbol $\sigma$ with output $d$, whenever configuration $K_{\phi_1}$ transitions to $K_{\phi_2}$  over $\sigma$ (formally $\delta_{W^*\times D}(K_{\phi_1},\sigma)=K_{\phi_2}$) and the state $q_1$ itself also transitions to $q_2$ (formally $(q_1,w,\sigma,d,q_2)\in\delta$) and the formula $\phi_2$ indeed holds true (after transitioning from $\phi_1$ over $w$ ). In a moment some of those transitions will need to be removed in order to simulate the effect of erased weights. Before that, we should first add one more extra state $f$ to $Q'$, which will be the only accepting state of $N$. Every time we put transition from $(q_1,\phi_1)$ to $(q_2,\phi_2)$ and $q_2$ is an accepting state, we need to put the exact same transition from $(q_1,\phi_1)$ to $f$. (This way we can simulate $\epsilon$-transition from $(q_2,\phi_2)$ to $f$.) For every initial state $q_i$ of $M$, we designate $(q,\phi)$ as initial state of $N$, where $\phi$ is the formula 
	\[\epsilon = S(q_1) = ... = S(q_i) = ... = S(q_n)\]
	and $\{q_1,...,q_n\}$ is the set of initial states $I$.
	If any initial state is also an accepting state, then we additionally set $f$ as initial state of $N$.
	
	Finally, the last step of conversion is to find all conflicting states and remove the transitions with lower weights. Recall that if there are two states $(q_1,\phi),(q_2,\phi)\in Q'$ transitioning to the same third state $q'_3\in Q'$ over the same symbol $\sigma$, then we call $(q_1,\phi_1)$ and $(q_2,\phi_2)$ conflicting. Remember that every transition in $\delta'$ is a "copy" of some weighted transition in $\delta$ (including those leading to $f$). Let's say that $w_1$ is the weight that "would be" put between $(q_1,\phi)$ and $q_3'$, if we hadn't erased it. Similarly fro $w_2$ and $(q_2,\phi)$. Next we need to lookup if according to $\phi$ the state $q_1$ carries lower or higher weight than $q_2$. This, together with the $w_1$ and $w_2$, gives us enough information to decide which of the transitions should be erased (if any). 
	
	This concludes the construction of $N$. Note that $N$ is nondeterministic and it's not possible to reach such configuration of $Q'$ in which two states would have different $\phi$ (This does not include $f$ which doesn't have any $\phi$ associated with it. There is no problem, because $f$ has no outgoing transitions). 
\end{proof}

Lexicographic transducer $M$ is said to be functional when the relation $ \min\limits_{D\rightarrow W^*}\mathcal{L}(M)\backslash W^*$ is functional. Theorem \ref{superposition_functional} together with theorem \ref{weights_erased} tells that every functional $M$ (after removing dead-end states) has all reachable superpositions  of the form $Q  \times W^* \rightarrow D$. After  removing all but the lowest weight (due to monotonicity), that leaves only $Q  \rightarrow W^* \times D$. This implies that any time there are two conflicting states, either the weights on transitions are different or they are the same but also the associated $D$ outputs are the same. If there was a conflicting pair of states with equal weights but different $D$, that would break functional nature of automaton and lead to ambiguous output.

Suppose that the automaton $M$ has no reachable conflicting states with equal weights and only one single accepting state.  Then $M$ is guaranteed to be functional. Such automata are called \textbf{strongly functional}. The lexicographic tropical semiring becomes "unnecessary" because the formula
\[
b_1w_1 > b_2w_2 \iff w_1 > w_2 \mbox{ or }( w_1=w_2\mbox{ and } b_1 > b_2)
\]
always falls into the left side of "or" and the recursion on the right never happens ($w_1 > w_2$ always holds). Therefore it's not necessary to keep the history of weights $W^*$ in the superposition $Q\rightarrow W^* \times D$. They can be dropped altogether and all computation can be carried out with only $Q\rightarrow D$.  Another special property of such $M$ is that it's "deterministic in reverse", that is, given sequence of configurations $K_0,K_1,...K_n$ for each step $\sigma_1,\sigma_2,...\sigma_n$ of computation, it can backtracked from accepting state back to initial state in a deterministic way, because when given particular state $q_i$ in $K_i$  ($q_n$ is the unique final state) and $\sigma_i$, then there is always only one smallest weight that could lead to $q_i$ from some state of $K_{i-1}$. Moreover, the unique final state is not a limitation, because nondeterminism can be used to simulate $\epsilon$-transitions (similarly to the way is was done in theorem \ref{weights_erased}) or a special end-marker $\#$ could be introduced. Now it can be shown that weights of such automata can be erased in a simpler way than in theorem \ref{weights_erased}.

\begin{theorem}[Weights can be erased - strongly functional case]
	Let $M$ be some lexicographic  transducer over $A=W^*\times \Sigma^*\times D$ that has only one final state and no conflicting states with equal weights. Then there exists automaton $N$ over $\Sigma^* \times D$ equivalent to  $ \min\limits_{D\rightarrow W^*}\mathcal{L}(M)\backslash W^*$.
	\label{weights_erased_simple}
\end{theorem}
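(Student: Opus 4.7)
The plan exploits the collapse, already established in the preceding discussion, of the superposition from $Q\rightarrow W^*\times D$ down to $Q\rightarrow D$: the top-level comparison in $b_1w_1 > b_2w_2 \iff w_1>w_2 \mbox{ or } (w_1=w_2 \mbox{ and } b_1>b_2)$ never reaches its second disjunct under the no-equal-conflicting-weights hypothesis, so the $W^*$-history need not be remembered---only which incoming transition wins each conflict matters.

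Concretely, I would set $Q_N=\{(K,q):K\subseteq Q,\,q\in K\}$. The first coordinate tracks $M$'s configuration after reading the current prefix, exactly via $\delta_{W^*\times D}$ from the preliminaries; the second coordinate guesses a state on a candidate minimum-weight accepting path. Initial states are $\{(I,q):q\in I\}$ and accepting states are the pairs $(K,q_F)$ with $q_F\in K$, where $q_F$ is the unique final state of $M$. For every $(K,q)\in Q_N$, symbol $\sigma\in\Sigma$ and transition $(q,w,\sigma,d,q')\in\delta$, I install the $N$-edge $((K,q),\sigma,d,(K',q'))$ with $K'=\delta_{W^*\times D}(K,\sigma)$ exactly when $w$ is the minimum weight of any $M$-transition into $q'$ from a state of $K$ over $\sigma$. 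The hypothesis guarantees this minimum is realised by a unique transition, so the rule is well-defined and the construction is visibly $W^*$-free. Any remaining $\epsilon$-transitions of $M$ are absorbed into $\delta_{W^*\times D}$ or dealt with by the end-marker $\#$ trick mentioned in the theorem's preamble.

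Correctness reduces to a short induction. Let $\mu_i(q)$ denote the lexicographic minimum of signatures of paths from $I$ to $q$ reading $\sigma_1\cdots\sigma_i$; all such signatures have length $i$ and are therefore comparable. By the definition of the order, $\mu_i(q')$ is realised by $\mu_{i-1}(q)\cdot w$ where $(q,w,\sigma_i,d,q')$ is the unique locally-winning transition into $q'$ from the configuration $K_{i-1}$. Iterating, the unique run of $N$ that ends in $(K_n,q_F)$ is obtained by concatenating these winning transitions, and the $D$-output it emits is precisely the image of its input under $\min\limits_{D\rightarrow W^*}(\mathcal{L}(M)\backslash W^*)$. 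Conversely, the minimum-weight accepting path of $M$ yields such a run of $N$.

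The main obstacle is justifying that the \emph{global} minimiser at each step coincides with the \emph{local} winner. This is exactly where the no-equal-conflicting-weights assumption is indispensable: without it the secondary clause $b_1>b_2$ would have to be invoked and the simpler $Q\rightarrow D$ bookkeeping would collapse, forcing a retreat to the full $Q\times\Phi$ construction of Theorem~\ref{weights_erased}. With the hypothesis in place the induction goes through cleanly, and the remainder of the argument is routine verification that the language of $N$ is $\min\limits_{D\rightarrow W^*}\mathcal{L}(M)\backslash W^*$.
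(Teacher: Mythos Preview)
Your construction is essentially the paper's: states $2^Q\times Q$ (you additionally restrict to $q\in K$, which just trims unreachable states), initial states $(I,q)$ for $q\in I$, final states those with second coordinate $q_F$, and transitions that track the configuration in the first coordinate while following an $M$-edge in the second---the only cosmetic difference is that the paper first installs all such edges and then deletes the losing one in each conflict, whereas you install only the locally winning edge from the start. Your explicit induction on $\mu_i$ exploiting the right-to-left lexicographic order is a welcome addition, since the paper's own proof leaves correctness implicit.
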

\begin{proof}
	Similar to the previous case but this time we put $Q'=2^Q\times Q \cup \{f\}$. If $S$ is configuration in $M$, we know that we only need to keep track of $Q \rightarrow D$, instead of $Q \times W^* \times D$. Therefore we can convert $S$ to superposition $S'$ of $N$, by setting $((K_S,q),d)\in S'$ for every  $(q,d)\in S$, where $K_S$ is the configuration corresponding to $S$.  
	
	We put transition from $(K_1,q_1)$  to $(K_2,q_2)$ in $Q'$ over $\sigma$ with output $d$, whenever $K_1$ transitions to $K_2$ over $\sigma$ (formally $\delta_{W^*\times D}(K_1,x)=K_2$) and $q_1$ transitions to $q_2$ over $\sigma$ and $d$ (formally $(q_1,w,\sigma,d,q_2) \in \delta$). 
	
	We make state $(K,q)$ of $N$ final whenever $q$ is final. The initial states of $N$ are all of the form $(I,q)$ for each $q$ in $I$.
	
	The last step is to find all conflicting states, that is, two states $(K,q_1),(K,q_2)\in Q'$ having the same configuration $K$ and transitioning to some $q'\in Q'$ over the same $\sigma$. The transitions $((K,q_1),\sigma,d,q')$ and $((K,q_2),\sigma,d,q')$ are called conflicting transitions. Every time we encounter them, we delete the one with higher weight. We will find out their weights by looking put what transition from $\delta$ lead to their creation. It will never happen that two conflicting transitions have equal weights.
\end{proof}

Lexicographic transducers (even the strongly functional ones) can be exponentially smaller than the smallest nondeterministic equivalent 2-tape automata. To show this we will need help of Myhill-Nerode theorem.

Let $L\subset B \rightarrow C$ be some (partial) function and let $b_1,b_2\in B$. Element $(b,c)\in A$ is a \textbf{distinguishing extension up to $B$} of $b_1$ and $b_2$ if exactly one of $(b_0b,L(b_0)c)$ or $(b_1b,L(b_1)c)$ belongs to $L$. Define an equivalence relation $=_L$ on $A$ such that $a_0 =_L a_1$ if and only if there is no distinguishing extension up to $B$ for $a_0$ and $a_1$.
\begin{theorem}[Generalized Myhill-Nerode theorem]
	\label{myhill_nerode}
	Let $L\subset B \rightarrow C$. Assume that $L(bb')=c'\ne \emptyset$ and $M(b)=c\ne \emptyset$ implies $c$ is a prefix of $c'$ (preservation of prefixes holds). $L$  can be recognized by automaton deterministic up to $B$ if and only if there are only finitely many equivalence classes induced by $=_L$.
\end{theorem}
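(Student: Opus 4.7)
The plan is to mimic the classical Myhill--Nerode argument, with the extra care that our automaton is only deterministic up to $B$, so states must also carry information about the $C$-output accumulated so far, and the preservation-of-prefixes hypothesis is what makes those outputs well defined.

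For the forward direction, suppose $M=(Q,I,B\times C,\delta,F)$ is deterministic up to $B$ and recognizes $L$. By Theorem~\ref{superposition_deterministic}, for each $b\in B$ the superposition $\hat{\delta}_C(I\times\{1_C\},b)$ contains at most one pair $(q_b,c_b)$, and by the functional assumption on $L$ we have $c_b=L(b)$ whenever $b$ is in the domain of $L$. Define $\Phi\colon B\to Q$ by $\Phi(b)=q_b$ (undefined when no path exists). I would then show that $\Phi(b_1)=\Phi(b_2)$ implies $b_1 =_L b_2$: indeed, for every continuation $(b,c)$, whether $(b_ib,L(b_i)c)\in L$ is decided by following the same deterministic run from $q_{b_1}=q_{b_2}$, so the two outcomes must agree. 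Hence the number of $=_L$-classes is bounded by $|Q|$, which is finite.

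For the converse, I would build the minimal automaton $M_L$ directly from the equivalence classes. Let $Q=\{[b]:b\in B\}$, take $[1_B]$ as the unique initial state, and let $F=\{[b]:b\in\mathrm{dom}(L)\}$. For each class $[b]$ and each generator $b'$ of $B$ for which $[bb']$ is nonempty, put a transition from $[b]$ to $[bb']$ labelled $(b',c')$, where $c'$ is obtained as follows: pick any representative $b_0\in[b]$ such that some extension $b_0 b' x$ lies in $\mathrm{dom}(L)$; by preservation of prefixes (Theorem~\ref{matching_prefixes}, which is exactly the hypothesis we assumed on $L$) $L(b_0)$ is a prefix of $L(b_0 b' x)$, so there is a unique $c'\in C$ such that $L(b_0 b' x)=L(b_0)c' y$ for some $y$; set the output to be that $c'$. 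The main thing to check is that $c'$ is independent of the choice of representative $b_0\in[b]$ and of the continuation $x$: if two choices gave different $c'_1\ne c'_2$, we could extract from this disagreement a distinguishing extension between the two representatives, contradicting their being in the same class.

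The main obstacle is this well-definedness argument, which is where preservation of prefixes really earns its keep: without it, the output along a transition could depend on which continuation we used to ``peek ahead'' past $[bb']$, and no deterministic-up-to-$B$ machine could exist. Once well-definedness is established, a routine induction on the length of $b$ (using the definitions of $\delta_C$ and $\hat{\delta}_C$ given earlier) shows that reading $b$ from $[1_B]$ in $M_L$ leads to the unique state $[b]$ with accumulated output $L(b)$, so $M_L$ accepts exactly $L$. Finiteness of $Q$ is given by hypothesis, and determinism up to $B$ is immediate from the construction, since for each $([b],b')$ there is a unique target class $[bb']$ and a unique output $c'$.
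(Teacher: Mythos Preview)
Your proposal is correct and follows essentially the same route as the paper: in one direction you bound the number of $=_L$-classes by the state set of a deterministic-up-to-$B$ recognizer (the paper phrases this as a ``homomorphism from states to equivalence classes'' and leaves the details to the reader), and in the other direction you build the canonical automaton whose states are the equivalence classes, using preservation of prefixes to extract a well-defined transition output. The only cosmetic difference is that the paper reads off the output as the suffix $s$ with $cs=c'$ for $(b,c)\in q$ and $(bb',c')\in q'$, whereas you obtain it by peeking ahead along an arbitrary continuation into $\mathrm{dom}(L)$; your version is actually a bit more explicit about why this output is independent of the chosen representative, a point the paper leaves implicit.
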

\begin{proof}
	($\impliedby$) First assume there are finitely many equivalence classes. Let $G$ be the smallest generator of $B$. Then build an automaton by treating every class as a state $Q$. Put a transition from class $q$ to $q'$ over $b'\in G\backslash 1_B$ whenever there exists $(b,c)\in q$ and $(bb',c')\in q'$. Preservation of prefixes guarantees existence of suffix $s$ such that $cs=c'$. This suffix shall be used as transition output. By taking $b'$ only from $G\backslash 1_B$ we ensure that automaton is sequential up to $B$ and has no $\epsilon$-transitions. The class that contains $1_A$ is designated as the unique initial state (hence the automaton is deterministic up to $B$). All the classes intersecting $L$ are accepting states (note that if $q\cap L \ne \emptyset$ then $q\subset L$).
	
	($\implies$) Conversely, if there is an automaton deterministic up to $B$ and recognizing $L$, then there could be found a homomorphism from states of machine to equivalence classes. (The exact proof is well covered in most introductory courses to automata theory and this generalized version is largely analogical, so we won't elaborate on this proof much further.)
\end{proof}

The above theorem no longer works when automaton is not deterministic at least up to $B$. 
\begin{theorem}
	There exists a family of strongly functional lexicographic transducers such that their equivalent minimal 2-tape nondeterministic automata (after erasing weights) are exponentially larger.
	\label{exp_weighted}
\end{theorem}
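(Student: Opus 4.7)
The plan is to exhibit an explicit family of functions $f_n$ together with strongly functional lexicographic transducers $M_n$ of size polynomial in $n$ computing them, and then to argue that any nondeterministic 2-tape automaton computing the same $f_n$ requires $2^{\Omega(n)}$ states. A natural candidate family arises by choosing $f_n$ so that the output is a lexicographic minimum witness of some nondeterministically checkable property; for instance, let $f_n(w)$ be the lexicographically smallest length-$n$ substring of $w$, suitably padded to synchronize tape lengths. This admits an $O(n)$-state lexicographic transducer $M_n$: nondeterministically pick the starting position, emit the corresponding substring (using the substring itself as its own weight), pad with zeros elsewhere, and let the lexicographic arctic semiring select the minimum across all branches. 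By construction, no two reachable conflicting states carry equal weights and there is a unique accepting state, so $M_n$ is strongly functional in the sense introduced above.

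For the lower bound, the strategy is a fooling-set argument. The aim is to exhibit $2^{\Omega(n)}$ input prefixes $u_I$ indexed by subsets $I \subseteq [n]$, together with matching continuations $v_I$, such that the output $f_n(u_I v_J)$ uniquely identifies the pair $(I,J)$ by its leading output symbol at some well-chosen position. Such a family forces any nondeterministic 2-tape automaton for $f_n$ to traverse at least $2^{\Omega(n)}$ pairwise incomparable reachable configurations after the $u_I$ prefixes, which is the desired exponential gap. The underlying intuition is that the powerset-flavored state blow-up of Theorem \ref{weights_erased_simple} is tight for this family: without weights, an NFA cannot defer its commitment to output symbols, so at each step it must simultaneously carry enough information to rule out exponentially many smaller candidate substrings.

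The main obstacle is the lower bound itself. Theorem \ref{myhill_nerode} provides only a deterministic characterization, and the standard Myhill-Nerode technique does not directly translate to NFA lower bounds. The crucial leverage is the functionality constraint: the 2-tape automaton under consideration recognizes a function, so distinct accepting paths on a common input must agree on the output tape. This is strictly stronger than mere language equivalence and blocks the usual nondeterministic compression. The concrete combinatorial step will be to show that two distinct prefixes $u_I$ and $u_{I'}$ force incompatible output commitments on some short extension, so that their reachable state sets cannot share any state which participates in an accepting path on some common $v$. Assembling these observations across all $2^n$ subsets $I$ yields the exponential lower bound and completes the proof.
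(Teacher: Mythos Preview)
Your proposal has two substantive gaps. First, the transducer you sketch for ``smallest length-$n$ substring'' is not strongly functional as claimed. Take $n=1$ and input $00$: the branch that places its window at position $1$ and the branch that defers to position $2$ both reach the same terminal state on the second step, and with your ``pad with zeros'' convention both of those transitions carry weight $0$. That is precisely a pair of reachable conflicting states with equal weights, violating strong functionality. Other padding choices do not help: a minimal padding symbol makes earlier windows always dominate in the right-to-left lexicographic order and a maximal one makes them always lose, so the transducer no longer computes the intended minimum. Obtaining a correct strongly functional witness for this particular $f_n$ is not automatic, and you have not supplied one.

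Second, and more seriously, the lower bound is only a plan, and the plan has a flaw. Your key inference---that if $u_I$ and $u_{I'}$ force different outputs on a common suffix $v$ then their reachable state sets cannot share any state participating in an accepting run on $v$---is false for $2$-tape automata. A single state $q$ may be reached from $u_I$ with output prefix $o_I$ and from $u_{I'}$ with a different output prefix $o_{I'}$; the common continuation from $q$ on $v$ then yields distinct total outputs $o_I o_v$ and $o_{I'} o_v$ with no contradiction to functionality, since the inputs $u_I v$ and $u_{I'} v$ are themselves distinct. Functionality therefore does not convert Nerode-style distinguishability into an NFA fooling set in the way you suggest. The paper avoids both difficulties with a different construction: its transducer uses $\{0,1\}$-transitions to drive the reachable set through every subset of $\{q_1,\dots,q_n\}$, and on a terminating symbol $2$ emits a single output $y_k$ selected by distinct weights $w_1<\cdots<w_n$. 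Because the input and output alphabets are disjoint, the $2$-tape relation embeds as a single-tape language over their union, and the nondeterministic lower bound is then obtained through the Kameda--Weiner reduced-automaton machinery rather than a direct fooling-set argument.
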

\begin{proof}
	Define family of strongly functional lexicograpgic transducers in such a way that for every $i \ge 3 $ there is one defined on $i$ states. The number of states of minimal equivalent 2-tape automaton is $O(2^i)$. Figure \ref{exp_weighted_fig} presents a way to build such automata. State $q_0$ is initial. Using strings from $\{0,1\}^*$ one can obtain any configuration of states $q_1$ to $q_n$. Let's associate each configuration with a string $z\in\{0,1\}^n$ (for instance $z=011$ would be a configuration $\{q_2,q_3\}$). That gives  $2^n$ possible strings. State $q_{n+1}$ is accepting and all the states $q_1$...$q_n$ are connected to it. Essentially the relation described by this automaton is a subset of $\{0,1\}^+2 \times \{y_1,...,y_n\}$. Also suppose that weights $w_1$...$w_n$ are in strictly ascending order. Then the automaton maps every $z$ determined by $x\in\{0,1\}^+$ to some $(x,y_k)$ such that $k$ indicates the least significant bit in $z$. For instance suppose $n=4$ and $x=0011\sim z=1101$ then $(x2,y_4)$,$(x02,y_3)$,$(x002,y_4)$,$(x0002,y_4)$,$(x00002,\emptyset)$. Notice that one can reconstruct $z$ from such sequence of $y$'s. 
	
	Every 2-tape automaton that has disjoin alphabets in each tape can be simulated by a single-tape automaton reading union of those alphabets. In this case such union is $D=\{0,1,2,y_1,...,y_n\}$. This way the language becomes subset of $\{0,1\}^+2\{y_1,...,y_n\}$ and every pair $(x,y_k)$ becomes a string $xy_k$. Using Myhill-Nerode theorem, it can be seen that no two $x$ strings that map to two different $z$ are equivalent, hence the smallest deterministic FSA must have at least $2^n$ states. Call this minimal automaton $\mathcal{A}$. The most difficult problem is to show that no nondeterministic automaton polynomially smaller than $\mathcal{A}$ can be build. The rigorous proof can be obtained with help of Theorem 7 and Lemma 7 presented by Kameda and Weiner \cite{KAMEDA}. We can build RAM using $D(\mathcal{A})$ and $D(\overleftarrow{\mathcal{A}})$ (all defined in \cite{KAMEDA}).  All configurations of states $q_1$...$q_n$  have different succeeding event\cite{KAMEDA} and none of them is subset of the other (because there exists bijection between $z$ and sequence of $y$'s produced by $(x2,y_{k_0})$,$(x02,y_{k_1})$,...). Hence the minimal legitimate grid cannot be extended for any of them and the nondeterministic FSA cannot be much smaller than $2^n$. 
\end{proof}

\begin{figure}[!t]
	\centering
	\includegraphics[width=8cm]{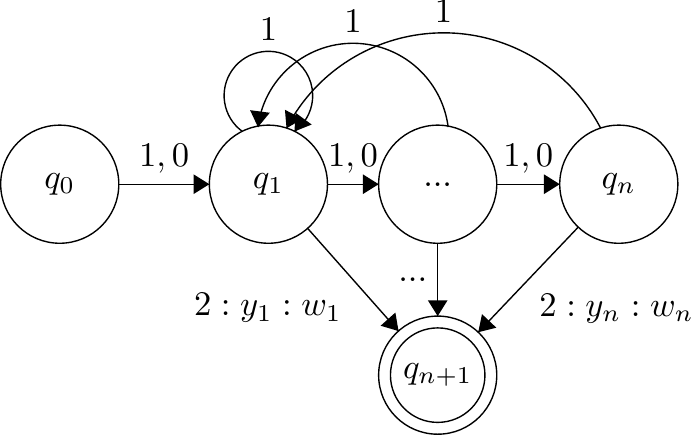} 
	\caption{In this sketch of lexicographic transducer all the weights $w_1$,...,$w_n$ are distinct. In many other transitions, weights were omitted, as they don't play any role and could be arbitrary.}
	\label{exp_weighted_fig}
\end{figure}
One can easily notice that every 2-tape automaton can be treated like a lexicographic transducer with all weights equal, therefore the opposite of theorem \ref{exp_weighted} doesn't hold (there is no family of 2-tape automata such that lexicographic transducers would be larger).

It's possible to decide whether lexicographic transducers are functional using quadratic procedure analogical to the one for unweighted transducers \cite{Marie-Pierre}. In case of strongly functional lexicographic transducers the procedure becomes even simpler, as instead of using "Advance \& Delay"\cite{Marie-Pierre}, it's enough to square automaton and make sure that it has no weight-conflicting transitions. 

When introducing transformation semigroup, the strings in $A$ were treated as partial functions $Q\rightarrow Q$.  Every such function takes some configuration and produces a new one. When $A=B \times C$, superpositions can be used instead. Every string $b$ in $B$ becomes a function $b:Q\times C \rightarrow Q \times C$. In case of nondeterministic automata, there is $b:2^{Q\times C} \rightarrow 2^{Q \times C}$.

In the particular case of strongly functional lexicographic transducers over $A=W^*\times\Sigma^*\times D$, every $x$ in $\Sigma^*$ becomes  a function $x : 2^{Q \rightarrow D} \rightarrow 2^{Q \rightarrow D}$. (Notice how $W$ wasn't included, because the attempt is not to model $\mathcal{L}(M)$ but rather the language after minimization $ \min\limits_{D\rightarrow W^*}\mathcal{L}(M)\backslash W^*$.) This leads to conclusion that  lexicographic transducers are not a specialization, but rather a generalization of  transducers, because weights $W$ are merely a tool to take greater control over the transformation semigroup. This could not be said about other types of weighted transducers, as their transformation semigroups cannot be expressed without also including information about weights accumulated in each superposition (just using $2^{Q \times D}$ without tape of weights would not be enough). 

This can shed some additional light for theorem \ref{exp_weighted}. Even though nondeterministic single-tape automata can have exponentially less states than deterministic ones, the number of all reachable configurations would still be equal in both. The smallest subset $2^Q$ for  transformation semigroup $2^Q\rightarrow2^Q$ would be isomorphic to smallest set $Q$ in deterministic $Q\rightarrow Q$. This situation is different for multitape automata and it's what lexicographic transducers try to take advantage of.

\section{Conclusions}
Weighted automata don't have to be an alien concept that requires any special extensions. All weights can be viewed as  tapes over alphabets with some particular properties. This more general approach give us necessary foundations for defining lexicographic transducers. They were invented by trying to generalize and simplify weighted automata. There is yet a lot to discover. Theorem \ref{exp_weighted} gives certain clues, that perhaps they could be inferred\cite{de_la_higuera} more efficiently, or at least generalise better. Solomonoff's theory of inductive inference \cite{SOLOMONOFF}\cite{KOLMOGOROV}\cite{SOLOMONOFF2} says that simpler and shorter automata, should be the preferred solution to inference problems. Lexicographic transducers can  express complex "replace-all" functions in simpler and more reliable ways than other weighted automata, specifically thanks to lack of commutativity in lexicographic tropical semiring. They seem perfectly suited for tasks that require the automaton to "forget history" of their weights.
\section*{Acknowledgment}

The author would like to thank Piotr Radwan for all the great inspiration.

\ifCLASSOPTIONcaptionsoff
  \newpage
\fi



%




\bibliographystyle{BibTeXtran}   
\bibliography{BibTeXrefs}       

\begin{thebibliography}{10}
\providecommand{\url}[1]{#1}
\csname url@samestyle\endcsname
\providecommand{\newblock}{\relax}
\providecommand{\bibinfo}[2]{#2}
\providecommand{\BIBentrySTDinterwordspacing}{\spaceskip=0pt\relax}
\providecommand{\BIBentryALTinterwordstretchfactor}{4}
\providecommand{\BIBentryALTinterwordspacing}{\spaceskip=\fontdimen2\font plus
\BIBentryALTinterwordstretchfactor\fontdimen3\font minus
  \fontdimen4\font\relax}
\providecommand{\BIBforeignlanguage}[2]{{%
\expandafter\ifx\csname l@#1\endcsname\relax
\typeout{** WARNING: IEEEtran.bst: No hyphenation pattern has been}%
\typeout{** loaded for the language `#1'. Using the pattern for}%
\typeout{** the default language instead.}%
\else
\language=\csname l@#1\endcsname
\fi
#2}}
\providecommand{\BIBdecl}{\relax}
\BIBdecl

\bibitem{PIN}
J.-E. Pin, \emph{Mathematical Foundations of Automata Theory}.\hskip 1em plus
  0.5em minus 0.4em\relax American Mathematical Society, 2017.

\bibitem{EILENBERG}
S.~Eilenberg, \emph{Automata, Languages and Machines Vol. A}.\hskip 1em plus
  0.5em minus 0.4em\relax Academic Press, 1974.

\bibitem{mihov_schulz_2019}
S.~Mihov and K.~U. Schulz, \emph{Finite-State Techniques: Automata, Transducers
  and Bimachines}, ser. Cambridge Tracts in Theoretical Computer Science.\hskip
  1em plus 0.5em minus 0.4em\relax Cambridge University Press, 2019.

\bibitem{Marie-Pierre}
M.-P. B{\'e}al, O.~Carton, C.~Prieur, and J.~Sakarovitch, ``Squaring
  transducers: An efficient procedure for deciding functionality and
  sequentiality of transducers,'' in \emph{LATIN 2000: Theoretical
  Informatics}, G.~H. Gonnet and A.~Viola, Eds.\hskip 1em plus 0.5em minus
  0.4em\relax Berlin, Heidelberg: Springer Berlin Heidelberg, 2000, pp.
  397--406.

\bibitem{Gurari}
I.~O. Gurari, E.M., ``A note on finite-valued and finitely ambiguous
  transducers,'' \emph{Math. Systems Theory}, 1983.

\bibitem{EILENBERG2}
S.~Eilenberg, \emph{Automata, Languages and Machines Vol. B}.\hskip 1em plus
  0.5em minus 0.4em\relax Academic Press, 1976.

\bibitem{MOHRI}
F.~P. Mehryar~Mohri and M.~Riley, ``Weighted finite-state transducers in speech
  recognition,'' \emph{AT\&T Labs – Research}, 2008.

\bibitem{MOHRI2}
M.~Mohri, \emph{Weighted Finite-State Transducer Algorithms. An
  Overview}.\hskip 1em plus 0.5em minus 0.4em\relax Springer, 2004.

\bibitem{de_la_higuera}
C.~de~la Higuera, \emph{Grammatical Inference: Learning Automata and
  Grammars}.\hskip 1em plus 0.5em minus 0.4em\relax Cambridge University Press,
  2010.

\bibitem{HANSAN}
C.~E. Hasan Ibne~Akram, Colin de la~Higuera, ``Actively learning probabilistic
  subsequential transducers,'' \emph{JMLR: Workshop and Conference
  Proceedings}, 2012.

\bibitem{KOLMOGOROV}
N.~V. A.~Shen, V. A.~Uspensky, \emph{Kolmogorov Complexity and Algorithmic
  Randomness}.\hskip 1em plus 0.5em minus 0.4em\relax IRIF, 2019.

\bibitem{DIEKERT}
V.~Diekert, \emph{The Book Of Traces}.\hskip 1em plus 0.5em minus 0.4em\relax
  Wspc, 1995.

\bibitem{DROSTE}
M.~Droste, W.~Kuich, and H.~Vogler, \emph{Handbook of Weighted Automata}, 01
  2009.

\bibitem{DROSTE2}
M.~Droste and D.~Kuske, ``Weighted automata,'' \emph{Institut fur Informatik,
  Universitat Leipzig}, 2010.

\bibitem{SALOMAA}
M.~S. Arto~Salomaa, \emph{Automata-Theoretic Aspects of Formal Power
  Series}.\hskip 1em plus 0.5em minus 0.4em\relax Springer-Verlag New York.

\bibitem{MEER}
K.~Meer and A.~Naif, ``Generalized finite automata over real and complex
  numbers,'' vol. 591, 04 2014.

\bibitem{roark-etal-2011-lexicographic}
\BIBentryALTinterwordspacing
B.~Roark, R.~Sproat, and I.~Shafran, ``Lexicographic semirings for exact
  automata encoding of sequence models,'' in \emph{Proceedings of the 49th
  Annual Meeting of the Association for Computational Linguistics: Human
  Language Technologies}.\hskip 1em plus 0.5em minus 0.4em\relax Portland,
  Oregon, USA: Association for Computational Linguistics, Jun. 2011, pp. 1--5.
  [Online]. Available: \url{https://www.aclweb.org/anthology/P11-2001}
\BIBentrySTDinterwordspacing

\bibitem{KAMEDA}
P.~W. Tsunehiko~Kameda, ``On the state minimization of nondeterministic finite
  automata,'' \emph{IEEE Transactions on Computers}, 1970.

\bibitem{SOLOMONOFF}
R.J.Solomonoff, ``A formal theory of inductive inference. part i,''
  \emph{Information and Control}, 1964.

\bibitem{SOLOMONOFF2}
------, ``A formal theory of inductive inference. part ii,'' \emph{Information
  and Control}, 1964.

\end{thebibliography}

%







\end{document}